\tikzstyle{printersafe}=[snake=snake,segment amplitude=0 pt]
\newtheorem{theorem}{Theorem}
\newtheorem{conjecture}{Conjecture}
\newenvironment{proof}{ {\bf Proof:}} {$\Box$}
\journal{Sample Journal}
\begin{document}

\begin{frontmatter}

\title{On two consequences of Berge-Fulkerson conjecture}

\author[label1]{Vahan V. Mkrtchyan\corref{cor1}}
\address[label1]{Dipartimento di Informatica,
Universita degli Studi di Verona, 37134 Verona, Italy}

\cortext[cor1]{The corresponding author}

\ead{vahanmkrtchyan2002@ysu.am}

\author[label5]{Gagik N. Vardanyan}
\address[label5]{Department of Informatics and Applied Mathematics, Yerevan State University, Yerevan, 0025, Armenia}
\ead{vgagik@gmail.com}


\begin{abstract}
The classical Berge-Fulkerson conjecture states that any bridgeless cubic graph $G$ admits a list of six perfect matchings such that each edge of $G$ belongs to two of the perfect matchings from the list. In this short note, we discuss two statements that are consequences of this conjecture. We show that the first statement is equivalent to Fan-Raspaud conjecture. We also show that the smallest counter-example to the second one is a cyclically $4$-edge-connected cubic graph.
\end{abstract}

\begin{keyword}
Cubic graph \sep perfect matching \sep Berge-Fulkerson conjecture \sep Fan-Raspaud conjecture
\end{keyword}

\end{frontmatter}



\section{The main result}

For a bridgeless cubic graph $G$, and a list of (not necessarily distinct) perfect matchings $C=(F_1,...,F_k)$, let $\nu_C(e)$ be the number of perfect matchings of $C$ that contain the edge $e$. For non-defined concepts we refer to \cite{Zhang1997}. The main topic of this short note are the following two classical conjectures:

\begin{conjecture}
\label{conj:Berge} (Berge, unpublished) Let $G$ be a bridgeless cubic graph. Then there is a list $C=(F_1,...,F_5)$ of perfect matchings, such that $\nu_C(e)\geq 1$ for any edge $e$ of $G$.
\end{conjecture}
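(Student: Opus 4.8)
The natural plan of attack is through Edmonds' perfect matching polytope, the standard tool for edge-covering questions on cubic graphs. The first step is to observe that for any bridgeless cubic graph $G$ the uniform vector $x_e = \tfrac{1}{3}$ lies in the perfect matching polytope of $G$. Indeed, each degree constraint reads $\sum_{e \ni v} x_e = 3 \cdot \tfrac{1}{3} = 1$, nonnegativity is immediate, and for every odd vertex subset $S$ a parity count gives $|\partial(S)| \equiv |S| \equiv 1 \pmod 2$; since $G$ is bridgeless we have $|\partial(S)| \neq 1$, hence $|\partial(S)| \geq 3$ and the odd-cut constraint $\sum_{e \in \partial(S)} x_e \geq 1$ holds. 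By Edmonds' theorem, $x$ is therefore a convex combination $x = \sum_i \lambda_i \chi_{M_i}$ of incidence vectors of perfect matchings, and since every edge receives positive total weight, the matchings $M_i$ appearing in this combination already cover $E(G)$.

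The second step is to thin this family down to five members. The bound $5$ is forced by the Petersen graph, which requires exactly five perfect matchings to cover its edges, so the target is sharp. Here I would set up a covering linear program: each edge lies in matchings of total $\lambda$-weight $\tfrac{1}{3}$, so a weighted cover is available, and the question becomes whether its support can be pruned to five matchings without uncovering any edge. To keep the pruning tractable I would first attempt a reduction to the cyclically $4$-edge-connected case, exploiting that across a nontrivial edge-cut of size $2$ or $3$ perfect matchings of the two sides can be spliced together, so that a minimal counterexample should inherit high connectivity; in that regime the matching structure is rigid enough to make an explicit or greedy choice of five matchings plausible.

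The hard part --- and the reason this statement is in fact the long-open Berge conjecture, known to be equivalent to Berge--Fulkerson --- is precisely this passage from the fractional cover to an integral cover of \emph{bounded} size. Edmonds' theorem gives no control whatsoever on how many matchings appear in the convex combination, and the covering linear program has no integrality property that would force a five-element subcover to exist; the support can be arbitrarily large. Bridging this gap would require genuinely new input, either a structural decomposition adapted to cubic graphs or a probabilistic selection argument that concentrates the cover onto few matchings, rather than the polytope machinery alone.
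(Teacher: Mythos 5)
This statement is not a theorem of the paper but the long-open Berge conjecture itself (Conjecture~\ref{conj:Berge}); the paper states it without proof and only uses its known equivalence with Conjecture~\ref{conj:BergeFulkerson} via \cite{Mazz2010}. Your proposal therefore cannot be compared against a proof in the paper, and, as you yourself concede in your final paragraph, it is not a proof. The first step is correct and standard: the all-$\tfrac13$ vector lies in Edmonds' perfect matching polytope of a bridgeless cubic graph, so every edge lies in some perfect matching and $E(G)$ is covered by the (finitely many) perfect matchings in the support of the convex combination. But this yields no bound at all on the number of matchings used; it is not even known whether \emph{any} fixed constant number of perfect matchings suffices to cover every bridgeless cubic graph, so the reduction from ``some cover'' to ``a cover by five'' is not a pruning step but the entire content of the conjecture.

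Concretely, the gap is in your second step: the covering linear program you describe has fractional optimum $3$ (achieved by the uniform weighting), but there is no integrality or rounding principle that converts a fractional cover of value $3$ into an integral cover by $5$ matchings --- the integrality gap of such covering programs is unbounded in general, and no structure specific to cubic graphs is invoked that would control it. The reduction to cyclically $4$-edge-connected graphs by splicing across small cuts is a sensible preliminary (it is the same style of argument the paper uses in Theorem~\ref{thm:VahanconjthmCyc4} for a different conjecture), but ``in that regime an explicit or greedy choice of five matchings is plausible'' is a hope, not an argument. So the proposal correctly identifies the known easy half and correctly locates the obstruction, but it does not supply the missing idea; the statement remains a conjecture.
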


\begin{conjecture}
\label{conj:BergeFulkerson} (Berge-Fulkerson \cite{Fulkerson}) Let $G$ be a bridgeless cubic graph. Then there is a list $C=(F_1,...,F_6)$ of perfect matchings, such that $\nu_C(e)=2$ for any edge $e$ of $G$.
\end{conjecture}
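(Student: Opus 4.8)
The statement to be attacked is the Berge-Fulkerson conjecture itself, so what follows is a programme rather than a routine argument. The plan is to begin from the \emph{fractional} relaxation, which one can establish unconditionally, and then isolate exactly where integrality must be forced. First I would invoke Edmonds' perfect matching polytope theorem. For a bridgeless cubic graph $G$ the constant edge-weighting $x_e=\tfrac13$ satisfies $x\ge 0$, $x(\delta(v))=1$ at every vertex, and $x(\delta(S))\ge 1$ for every odd set $S$, the last inequality because an odd cut in a bridgeless cubic graph has odd size and is therefore of size at least $3$. Thus $\tfrac13\mathbf 1$ lies in the perfect matching polytope, so it is a convex combination $\sum_i\lambda_i\chi_{M_i}$ of perfect matchings; scaling by $6$ exhibits the all-$2$ weighting as a nonnegative combination of perfect matchings with coefficients summing to $6$. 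The fractional form of Conjecture~\ref{conj:BergeFulkerson} therefore always holds, and the whole difficulty is compressed into replacing these rational coefficients by a list of exactly six perfect matchings with integer multiplicities.

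Second, I would try to make the remaining problem finite by reducing to the most rigid graphs. If $G$ is $3$-edge-colorable, a proper $3$-edge-coloring yields perfect matchings $M_1,M_2,M_3$ partitioning $E(G)$, and the list $(M_1,M_1,M_2,M_2,M_3,M_3)$ already satisfies $\nu_C(e)=2$ everywhere; so any counterexample is of class two. I would then argue that a smallest counterexample, minimizing the number of vertices, can contain neither a bridge, nor a nontrivial $2$-edge-cut, nor a nontrivial $3$-edge-cut: in each case one splits $G$ along the cut into strictly smaller bridgeless cubic graphs, applies minimality to obtain six-matching lists on the pieces, and glues these lists across the cut so that the covering number stays equal to $2$ on every edge, including the cut edges. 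Carrying this through would confine a minimal counterexample to the cyclically $4$-edge-connected cubic graphs of class two, exactly the kind of restriction that the second half of this note establishes for one of the two consequence statements.

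The main obstacle is twofold, and both parts live precisely on the cyclically $4$-edge-connected graphs of class two. The gluing step is delicate because matching the parities and multiplicities of the six matchings across a $3$-edge-cut need not be possible for an arbitrary pair of inherited lists, so the reduction may demand a more flexible induction hypothesis, for instance one that prescribes the behaviour of the list on a distinguished cut. More fundamentally, the jump from the fractional certificate to an integral one has no general mechanism: the common denominator produced by the polytope argument is uncontrolled, and the extremality of the Petersen graph shows that purely local balancing cannot force the count down to six. Bounding this integrality gap uniformly over all such graphs is the barrier that keeps Conjecture~\ref{conj:BergeFulkerson} open, and I would expect any genuine proof to interact with the Fan-Raspaud conjecture and with the five-matching statement of Conjecture~\ref{conj:Berge}, since a single common refinement of the six matchings controls all three at once.
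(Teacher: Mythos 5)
There is a genuine and unavoidable gap here, which you yourself identify: the statement is the Berge--Fulkerson conjecture, which is open, and the paper does not prove it --- it is stated as Conjecture~\ref{conj:BergeFulkerson} and used only as a hypothesis from which the two weaker statements would follow. Your proposal is therefore a programme, not a proof, and it should not be read as establishing the statement. That said, the unconditional parts of what you write are correct: the weighting $x_e=\tfrac13$ does lie in Edmonds' perfect matching polytope for a bridgeless cubic graph (an odd cut in a cubic graph has odd size, hence size at least $3$ when there is no bridge), so the ``fractional Berge--Fulkerson'' statement holds; $3$-edge-colorable graphs satisfy the conjecture by doubling the three color classes; and a smallest counterexample can be shown to be cyclically $4$-edge-connected by cut-splitting arguments of exactly the flavour used in the paper's proof of Theorem~\ref{thm:VahanconjthmCyc4}.

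The missing idea is the one you name in your last paragraph and then leave unresolved: passing from the convex combination $\tfrac13\mathbf 1=\sum_i\lambda_i\chi_{M_i}$ to a list of \emph{exactly six} perfect matchings with $\nu_C(e)=2$ everywhere. Carath\'eodory-type arguments control the number of distinct matchings appearing in the combination but not the common denominator of the $\lambda_i$, and no known local or inductive mechanism forces that denominator down to $3$ (equivalently, the total multiplicity down to $6$) on cyclically $4$-edge-connected class-two graphs. Since this integrality step is precisely the open content of the conjecture, the proposal cannot be repaired into a proof; at best it reproduces the standard reductions that are already folklore in the literature surrounding Conjectures~\ref{conj:Berge} and~\ref{conj:BergeFulkerson}.
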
 Clearly, Conjecture \ref{conj:BergeFulkerson} implies Conjecture \ref{conj:Berge}. In \cite{Mazz2010}, it is shown that Conjecture \ref{conj:Berge} implies Conjecture \ref{conj:BergeFulkerson}. Thus the two conjectures are equivalent. A list $C=(F_1, F_2, F_3)$ of a bridgeless cubic graph $G$ is called an FR-triple, if $F_1\cap F_2 \cap F_3=\emptyset$. The classical Fan-Raspaud conjecture asserts:

\begin{conjecture}
\label{conj:FR} (Fan-Raspaud \cite{FR1994}) Any bridgeless cubic graph admits an FR-triple.
\end{conjecture}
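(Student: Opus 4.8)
The plan is to attack Conjecture \ref{conj:FR} directly, separating the trivial case from the genuinely hard one and then working on the latter through a minimal-counterexample reduction. First I would dispose of the $3$-edge-colorable bridgeless cubic graphs: here the three color classes are pairwise disjoint perfect matchings, so their common intersection is empty and they already form an FR-triple. Hence a counterexample, if one exists, must have chromatic index four, i.e. be a snark. I would then pass to a counterexample $G$ with the fewest vertices and try to prove, by the usual edge-cut reduction arguments (see \cite{Zhang1997}), that $G$ is cyclically $4$-edge-connected with girth at least $5$: a small cyclic edge-cut or a short cycle should let me contract a reducible piece, build an FR-triple on the resulting smaller bridgeless cubic graph, and lift it back to $G$, contradicting minimality.

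For the constructive heart of the argument I would start from Petersen's theorem, which gives a perfect matching $F_1$, so that $G\setminus F_1$ is a $2$-factor. The key reformulation is that finding an FR-triple amounts to choosing perfect matchings $F_1,F_2$ and a third perfect matching $F_3$ disjoint from the matching $M=F_1\cap F_2$; equivalently, $F_3$ must be a perfect matching of the spanning subgraph $G-M$ obtained by deleting the edges of $M$. I would exploit the fact that in a snark no two perfect matchings are disjoint --- otherwise $F_1\cup F_2$ together with its complement $G\setminus(F_1\cup F_2)$ would be three pairwise disjoint perfect matchings, i.e. a proper $3$-edge-coloring --- so $M$ is always nonempty and the whole problem reduces to making $M$ \emph{avoidable}: one wants $F_1,F_2$ whose common part is small enough (ideally a single edge, as happens in the Petersen graph, where any two distinct perfect matchings meet in exactly one edge) that $G-M$ still admits a perfect matching. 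The scheme is therefore: (i) choose $F_1$; (ii) choose $F_2$ minimizing $|F_1\cap F_2|$; (iii) show that $G-M$ has a perfect matching.

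The main obstacle is step (iii) in the snark case, which is exactly the parity phenomenon that makes the problem hard. The $2$-factor $G\setminus F_1$ must contain odd cycles --- this is precisely what obstructs a local $2$-coloring of the $2$-factor into $F_2,F_3$ --- so one cannot avoid $M$ cycle-by-cycle and a global argument is forced. I expect the decisive difficulty to be guaranteeing, for a worst-case snark, both that $F_1,F_2$ can be chosen with a controllably small intersection and that deleting that intersection leaves a graph still satisfying Tutte's condition for a perfect matching. A reducibility approach would instead seek a finite list of configurations none of which can occur in a minimal counterexample; but no such sufficient list is presently known, which is the reason Conjecture \ref{conj:FR} remains open. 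I would consequently treat the above as a research program in which the $3$-edge-colorable case and the connectivity reductions are routine, whereas closing the snark case via step (iii) is the crux where genuinely new ideas are required.
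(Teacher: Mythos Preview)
The statement you are attempting to prove is the Fan--Raspaud conjecture itself, which the paper states as an open problem and does \emph{not} prove; there is therefore no proof in the paper to compare your proposal against. Your write-up is, as you yourself concede at the end, a research program rather than a proof: the $3$-edge-colorable case and the reformulation via $M=F_1\cap F_2$ are standard and correct, but step~(iii) in the snark case is precisely the content of the conjecture, and you offer no argument for it.

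Two specific points deserve flagging. First, the paper explicitly remarks (just after the proof of Theorem~\ref{thm:VahanconjthmCyc4}) that it is \emph{unknown} whether a smallest counterexample to Conjecture~\ref{conj:FR} must be cyclically $4$-edge-connected; so the ``usual edge-cut reduction'' you propose as a routine preliminary is in fact itself an open subproblem --- the gluing argument that works for Conjecture~\ref{conj:Vahanconj} relies on being able to prescribe frequencies on the cut edges, and plain Conjecture~\ref{conj:FR} gives no such control. Second, your heuristic of minimizing $|F_1\cap F_2|$ and then appealing to Tutte's condition on $G-M$ comes with no supporting argument: nothing you write rules out a snark in which every choice of $F_1,F_2$ leaves $G-(F_1\cap F_2)$ without a perfect matching. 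In short, the proposal correctly locates the difficulty but does not resolve it; since the conjecture remains open, that is unsurprising, but it means there is no proof here to assess.
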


In this paper, we consider the following two conjectures:

\begin{conjecture}
\label{conj:GagikFRtriple} For any bridgeless cubic graph $G$, any edge $e\in E(G)$ and $i\in \{0,1,2\}$, there is an FR-triple $C$ of $G$, such that $\nu_C(e)=i$.
\end{conjecture}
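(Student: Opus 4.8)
The plan is to prove that Conjecture~\ref{conj:GagikFRtriple} is equivalent to the Fan--Raspaud conjecture (Conjecture~\ref{conj:FR}); this is what the abstract announces as the first statement. One direction is immediate: if Conjecture~\ref{conj:GagikFRtriple} holds, then fixing any edge $e$ and taking, say, $i=0$ already produces an FR-triple of $G$, so Conjecture~\ref{conj:FR} follows. Thus the whole content is the reverse implication, and I would assume from now on that every bridgeless cubic graph admits an FR-triple and try to upgrade this to prescribed edge-multiplicities.

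Before invoking the conjecture I would record two elementary facts that I will use freely: in a bridgeless cubic graph every edge lies in some perfect matching (bridgeless cubic graphs are matching covered), and consequently every edge is also avoided by some perfect matching (if $ua$ lies in a matching $M$ then $M$ avoids the edge $e=uv$). I would also note that in any FR-triple $C=(F_1,F_2,F_3)$ one automatically has $\nu_C(e)\le 2$ for every edge, since $\nu_C(e)=3$ would force $e\in F_1\cap F_2\cap F_3$; so the three target values in Conjecture~\ref{conj:GagikFRtriple} are exactly the a priori possible ones. As a warm-up, and to isolate where the difficulty really lies, I would dispose of the $3$-edge-colourable case directly: if $M_1,M_2,M_3$ are the colour classes with $e\in M_1$, then $(M_1,M_2,M_3)$, $(M_1,M_1,M_2)$ and $(M_2,M_3,M_2)$ are FR-triples realising $\nu_C(e)=1,2,0$ respectively, their triple intersections being empty because distinct colour classes are pairwise disjoint. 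Hence the real target is the class of snarks, where we must genuinely feed the Fan--Raspaud conjecture.

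For the general case my plan is the standard reduction by local replacement: for each $i\in\{0,1,2\}$ I would build from $(G,e)$ a new bridgeless cubic graph $H_i$ by deleting $e$ and inserting a small gadget across the resulting $2$-edge-cut, apply Conjecture~\ref{conj:FR} to $H_i$ to obtain an FR-triple, and then project it back to an FR-triple of $G$. The guiding principle is the parity of $2$-edge-cuts: any perfect matching meets a $2$-edge-cut in an even number of edges, so with respect to the inserted gadget each of the three matchings is either ``transparent'' (both cut edges used, corresponding to $e\in F_j$) or ``crossed'' ($e\notin F_j$), with no intermediate behaviour. Under this encoding the projection of an FR-triple of $H_i$ is automatically an FR-triple of $G$ (a common edge of the projected triple would be a common edge in $H_i$), and $\nu_C(e)$ equals the number of transparent matchings; moreover the Fan--Raspaud condition on $H_i$ already forbids all three matchings from being simultaneously transparent, which is why the encoding can never ask for $\nu_C(e)=3$. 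It remains to choose the gadget for each $i$ so that every FR-triple of $H_i$ has exactly $i$ transparent matchings.

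The hard part will be exactly this forcing step for $i=0$ and $i=2$, and here a genuine subtlety must be confronted. One cannot simply make ``$e$ transparent'' impossible: the cut edges of $H_i$ are edges of a bridgeless cubic graph, so by the matching-covered property each lies in some perfect matching, forcing transparency to occur in at least one matching. The forcing must therefore be arranged at the level of the whole triple rather than of a single matching, by engineering the gadget so that its few internal perfect matchings, together with the emptiness of the triple intersection, pin down the number of transparent matchings; gadgets built around an odd internal substructure (so that the available internal matchings have a controlled, parity-constrained overlap) are the natural candidates. I expect the bulk of the work, and the only place where something can go wrong, to be the verification that a single concrete gadget is simultaneously (a) bridgeless and cubic, (b) rigid enough that \emph{every} FR-triple of $H_i$ realises the prescribed count, and (c) transparent/crossed in a way that projects back correctly; once such gadgets are exhibited for $i=0$ and $i=2$, the case $i=1$ follows similarly and the equivalence is complete.
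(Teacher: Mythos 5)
Your reduction of Conjecture \ref{conj:GagikFRtriple} to the Fan--Raspaud conjecture is the right target, and the surrounding observations are all correct: the easy direction, the facts that every edge of a bridgeless cubic graph lies in some perfect matching and is missed by another, that $\nu_C(e)\le 2$ for any FR-triple, the $3$-edge-colourable warm-up, the parity analysis of a $2$-edge-cut, and the fact that a projected FR-triple is again an FR-triple. But the proof has a genuine gap exactly where you locate it, and the gap cannot be closed: no gadget is ever exhibited, and no gadget of the kind you ask for can exist. You require a bridgeless cubic graph $H_0$ in which \emph{every} FR-triple assigns frequency $0$ to the cut edge $g$, and an $H_2$ in which every FR-triple assigns it frequency $2$. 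Such an $H_0$ would itself be a counter-example to Conjecture \ref{conj:GagikFRtriple} (apply the conjecture to $H_0$, the edge $g$ and $i=1$), and likewise $H_2$ would be a counter-example via $i=0$. So if the statement you are proving is true, your gadgets do not exist; the plan is self-defeating rather than merely unfinished. Local forcing at a single attachment point is the wrong mechanism.

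The paper sidesteps this by replacing ``forcing'' with a pigeonhole argument over many attachment points. It attaches a copy of $G-e_j$ across each of the $15$ edges of one Petersen graph $P$, obtaining a single bridgeless cubic graph $H$. A Fan--Raspaud triple of $H$ projects to an FR-triple of $P$, and any FR-triple of $P$ consists of three pairwise distinct perfect matchings whose pairwise intersections are three distinct single edges; counting then shows such a triple necessarily contains edges of frequency $0$, of frequency $1$ and of frequency $2$. Hence for the given $i$ some edge $f$ of $P$ has frequency $i$, and the copy of $G$ glued there inherits an FR-triple with $\nu(e)=i$ --- which suffices precisely because all $15$ attached graphs are copies of the same pair $(G,e)$. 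Nothing is forced at any particular location; the desired frequency is only guaranteed to occur somewhere. If you want to salvage your write-up, replace the hypothetical forcing gadgets by this single Petersen-based construction.
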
 

\begin{conjecture}
\label{conj:Vahanconj} Let $G$ be a bridgeless cubic graph, $e$ and $f$ be adjacent edges, $0\leq i,j\leq 2$ be two numbers with $1\leq i+j\leq 3$. Then $G$ contains an FR-triple $C$, such that $\nu_C(e)=i$ and $\nu_C(f)=j$.
\end{conjecture}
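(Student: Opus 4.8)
The plan is to derive the statement from the Berge-Fulkerson conjecture (Conjecture~\ref{conj:BergeFulkerson}); this is the natural route, since the statement is advertised as one of its consequences. So I would assume that $G$ carries a list $(F_1,\dots,F_6)$ of perfect matchings in which every edge lies in exactly two of them. Two elementary features of such a list do all the work. First, if $a,b,c$ are three distinct indices then $(F_a,F_b,F_c)$ is automatically an FR-triple: an edge of $F_a\cap F_b\cap F_c$ would be covered by at least three members of the list, contradicting that every edge is covered exactly twice. Thus I never have to verify the condition $F_a\cap F_b\cap F_c=\emptyset$ by hand, as it comes for free. Second, the coverage of $e$ and of $f$ is governed entirely by the common endpoint of these two adjacent edges.

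Concretely, let $v$ be the vertex incident with both $e$ and $f$, and let $g$ be the third edge at $v$. Every $F_k$ covers $v$ by exactly one edge, hence contains exactly one of $e,f,g$, so the index set $\{1,\dots,6\}$ splits according to which of the three edges a matching uses. Because each of $e,f,g$ lies in exactly two of the matchings, this splitting is a partition into three pairs $\mathcal{E}_e,\mathcal{E}_f,\mathcal{E}_g$, each of size two. The task now reduces to a pure selection problem: choose three distinct indices, exactly $i$ of them from $\mathcal{E}_e$ and exactly $j$ from $\mathcal{E}_f$, which forces the remaining $3-i-j$ to be taken from $\mathcal{E}_g$.

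The key step is the feasibility of this selection. Since each pair has exactly two indices, it can be carried out precisely when $0\le i\le 2$, $0\le j\le 2$, and $0\le 3-i-j\le 2$; the last two inequalities are equivalent to $1\le i+j\le 3$, which is exactly the hypothesis on $i$ and $j$. For any admissible choice, letting $C$ be the resulting triple, we get $\nu_C(e)=i$ and $\nu_C(f)=j$, because the only matchings containing $e$ (respectively $f$) are those indexed by $\mathcal{E}_e$ (respectively $\mathcal{E}_f$); and $C$ is an FR-triple by the first observation. This closes the argument.

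I do not expect the counting to be where the difficulty lies: it is dictated by the hypothesis and amounts to bookkeeping. The real obstacle is that this reasoning establishes only the implication ``Conjecture~\ref{conj:BergeFulkerson} $\Rightarrow$ the statement'', so it certifies the statement as a consequence of Berge-Fulkerson rather than proving it outright. Proving it unconditionally would require dispensing with the full six-matching cover, and I would expect the substantive work to appear in a reduction of a minimal counterexample to a cyclically $4$-edge-connected cubic graph, where structural arguments can be brought to bear.
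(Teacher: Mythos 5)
Your argument is correct, and it is precisely the reasoning the paper has in mind when it asserts, without detail, that Conjecture~\ref{conj:BergeFulkerson} implies Conjecture~\ref{conj:Vahanconj}: the statement is itself a conjecture, the paper offers no unconditional proof, and your partition of the six Berge--Fulkerson indices at the common endpoint of $e$ and $f$ into three pairs, followed by the selection count showing feasibility exactly when $1\leq i+j\leq 3$, is the standard verification of that implication. You are also right that the substantive content of the paper concerning this conjecture is the structural reduction of a smallest counter-example to a cyclically $4$-edge-connected graph (Theorem~\ref{thm:VahanconjthmCyc4}), not a proof of the conjecture itself.
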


It is easy to see that Conjecture \ref{conj:BergeFulkerson} implies Conjecture \ref{conj:Vahanconj}, Conjecture \ref{conj:Vahanconj} implies Conjecture \ref{conj:GagikFRtriple}, and Conjecture \ref{conj:GagikFRtriple} implies Conjecture \ref{conj:FR}. We are ready to obtain our first result.

\begin{theorem}
\label{thm:GagikFRthm} Conjecture \ref{conj:GagikFRtriple} is equivalent to Fan-Raspaud conjecture (Conjecture \ref{conj:FR}).
\end{theorem}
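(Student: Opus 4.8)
Since Conjecture~\ref{conj:GagikFRtriple} immediately specialises to the existence of a single FR-triple, it implies Conjecture~\ref{conj:FR}, as already noted, so the whole content of the theorem is the reverse implication. Assuming that \emph{every} bridgeless cubic graph has an FR-triple, I must produce, for a given $G$, a given edge $e=uv$ and a given $i\in\{0,1,2\}$, an FR-triple $C$ of $G$ with $\nu_C(e)=i$. The plan is to reduce each of these existence statements to Conjecture~\ref{conj:FR} itself, applied to an auxiliary bridgeless cubic graph $G_i$ obtained from $G$ by a local modification at $e$: Conjecture~\ref{conj:FR} hands us an FR-triple of $G_i$, and the point is to translate it back into an FR-triple of $G$ realising the prescribed value on $e$.

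Two observations will drive the translation. First, a structural remark about FR-triples: for any FR-triple $(F_1,F_2,F_3)$ and any vertex $w$, the three sets $S_f=\{k:f\in F_k\}$ attached to the edges $f$ at $w$ partition $\{1,2,3\}$, because each $F_k$ meets $w$ exactly once; hence the multiset of multiplicities at $w$ is either $(1,1,1)$ or $(2,1,0)$, and in particular $\nu_C$ only ever takes the values $0,1,2$, exactly the range of $i$. Second, a parity principle: across any edge cut the number of matching edges has a parity fixed by the side, so a $2$-edge-cut with an even side is met by every perfect matching in $0$ or $2$ edges, and one with an odd side in exactly $1$ edge.

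The basic transfer device is a two-copy gluing. Take two disjoint copies $G^{(1)},G^{(2)}$ of $G$, delete $e$ from each, and add the two edges $u^{(1)}u^{(2)}$ and $v^{(1)}v^{(2)}$; the result $G^{\ast}$ is bridgeless and cubic, and the two new edges form a $2$-edge-cut whose sides have even order. By Conjecture~\ref{conj:FR}, $G^{\ast}$ has an FR-triple; in each of its three matchings the cut is used twice or not at all. Reading off the copy $G^{(1)}$, and appending $e$ precisely in those matchings where the cut is used, yields three perfect matchings of $G$; their only possible common edge is $e$, which would force a cut edge into all three matchings and so violate the FR property of $G^{\ast}$. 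Thus one always extracts an FR-triple $C$ of $G$ with $\nu_C(e)$ equal to the number of matchings using the cut, a value in $\{0,1,2\}$. This is the template, and the remaining task for each $i$ is to force that number to equal $i$.

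This count control is the crux. For $i\in\{1,2\}$ I would instead attach a gadget at $e$ turning $e$ into one edge of an odd $2$-edge-cut; by the parity principle each matching then meets the pair $\{e,e'\}$ in exactly one edge, so $\nu_C(e)+\nu_C(e')=3$ and the vertex-pattern lemma (which forbids multiplicity $3$) confines $\nu_C(e)$ to $\{1,2\}$, after which an asymmetric choice of gadget on the $e'$ side is used to select the prescribed value. The genuinely delicate case is $i=0$: one cannot simply force $e$ out of every matching, since in a bridgeless cubic graph every edge lies in some perfect matching, so no local gadget can pin the count to $0$ across \emph{all} FR-triples of the auxiliary graph. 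Here I expect to argue globally on $G^{\ast}$, showing that an arbitrary FR-triple can be moved, by alternating exchanges along the cycles of $F_a\triangle F_b$, to one whose connecting cut is unused in every matching; verifying that such exchanges preserve the empty-intersection condition is the step I expect to be hardest.
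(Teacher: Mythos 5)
Your easy direction and your transfer mechanism are both fine: gluing two copies of $G-e$ along a $2$-edge-cut, restricting an FR-triple of the composite graph to one copy, and re-inserting $e$ exactly in those matchings that use the cut does produce an FR-triple of $G$ with $\nu_C(e)\in\{0,1,2\}$. The genuine gap is precisely where you locate it yourself: nothing in this construction lets you \emph{prescribe} the value of $\nu_C(e)$. Your gadget for $i\in\{1,2\}$ only pins down the multiset $\{\nu_C(e),\nu_C(e')\}=\{1,2\}$, and the ``asymmetric choice of gadget'' that is supposed to decide which of the two edges gets which value is never exhibited; for $i=0$ you concede that no local gadget can work and fall back on an unproved claim that alternating exchanges along the cycles of $F_a\triangle F_b$ can always clear the connecting cut while preserving $F_1\cap F_2\cap F_3=\emptyset$. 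Neither step is established, so the proof does not close.

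The paper avoids forcing the count altogether by hedging with the Petersen graph. Take $15$ copies $G_1,\dots,G_{15}$ of $G$ and replace the $j$-th edge of the Petersen graph $P$ by $G_j-e_j$, joined via a $2$-edge-cut (your own gluing, performed $15$ times at once). An FR-triple of the resulting bridgeless cubic graph $H$ induces, by contracting each inserted block back to an edge, an FR-triple $C_P$ of $P$. In $P$ any two distinct perfect matchings meet in exactly one edge, so the three matchings of $C_P$ must be pairwise distinct (two equal ones would already give a common edge of all three), whence exactly $3$ edges of $P$ have frequency $2$, exactly $9$ have frequency $1$ and exactly $3$ have frequency $0$: every value $i\in\{0,1,2\}$ is realised on some edge $f$ of $P$. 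The copy of $G$ planted on that $f$ then yields, by the same read-off argument you already have, an FR-triple of $G$ with $\nu_C(e)=i$. This counting fact about the Petersen graph --- that \emph{some} edge of each frequency must exist in \emph{any} FR-triple --- is the ingredient that replaces your missing count control.
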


\begin{proof}
It suffices to show that Conjecture \ref{conj:FR} implies Conjecture \ref{conj:GagikFRtriple}. Let $G$, its edge $e$ and $0\leq i\leq 2$ be given. Take one copy of Petersen graph $P$, 15 copies $G_1,...,G_{15}$ of $G$, and let $e_1,...,e_{15}$ be the copies of $e$ in these 15 graphs. Consider a bridgeless cubic graph $H$ obtained as follows: for $j=1,...,15$ remove $j$th edge from $P$ and connect it with a $2$-edge-cut to $G_j-e_j$.

Now, by Conjecture \ref{conj:FR} $H$ admits an FR-triple $C_H$. Observe that $C_H$ gives rise to an FR-triple $C_P$ of $P$. Now, there is an edge $f$ of $P$ such that $\nu_{C_P}(f)=i$ (this is easy to prove by considering cases $i=0,1,2$ separately). Consider the FR-triple of $G$ corresponding to the copy $f$ arising from $C_H$. Observe that in this FR-triple $e$ belongs to exactly $i$ of the perfect matchings. The proof is complete. 
\end{proof}

Our next result deals with the smallest counter-example to Conjecture \ref{conj:Vahanconj}.

\begin{theorem}
\label{thm:VahanconjthmCyc4} The smallest counter-example to Conjecture \ref{conj:Vahanconj} is cyclically $4$-edge-connected.
\end{theorem}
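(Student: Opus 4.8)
The standard strategy for proving that a smallest counter-example to a matching-type conjecture on cubic graphs is cyclically $4$-edge-connected proceeds by contradiction: assume $G$ is a smallest counter-example that has a cyclic edge-cut of size $k\in\{1,2,3\}$, cut along it, patch the two resulting pieces into smaller bridgeless cubic graphs, apply the conjecture (which holds on anything smaller than $G$) to the pieces, and then reassemble the FR-triples across the cut. So let me work through what I would need to verify.

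First I would dispose of the easy reductions. Since $G$ is bridgeless there is no cyclic $1$-edge-cut, so cyclic edge-connectivity is at least $2$. For a cyclic $2$-edge-cut, say edges $e_1,e_2$ separating $G$ into parts $A$ and $B$ each containing a cycle, I would suppress the cut: in the $A$-side form a smaller bridgeless cubic graph $G_A$ by joining the two half-edges of $e_1,e_2$ into a single new edge $a$ (and symmetrically get $G_B$ with a new edge $b$). The plan is: given the prescribed values $i,j$ on adjacent edges $e,f$ of $G$, locate $e,f$ inside whichever side contains them, apply Conjecture~\ref{conj:Vahanconj} to the smaller graph on that side with the same prescribed pair, apply the (unconditionally available, being smaller) Conjecture~\ref{conj:Vahanconj} or at least an FR-triple to the other side, and match up the membership counts on $a$ and $b$ so the two FR-triples glue along $e_1,e_2$ into an FR-triple of $G$ realizing $\nu_C(e)=i$, $\nu_C(f)=j$. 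The key parity/compatibility fact I would invoke is that along any edge-cut the number of perfect-matching indices crossing the cut is constrained, so a matching on one side forces a compatible choice on the other; for a $2$-cut this is rigid enough to control the gluing.

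The cyclic $3$-edge-cut case is the substantive one, and I expect it to be the main obstacle. Here $G$ splits across edges $e_1,e_2,e_3$ into $A,B$; the natural smaller graphs are $G_A$, obtained by contracting the $B$-side to a single new vertex $v_A$ incident to the three half-edges $e_1,e_2,e_3$, and symmetrically $G_B$ with vertex $v_B$. Both $G_A,G_B$ are bridgeless and cubic and strictly smaller than $G$, so Conjecture~\ref{conj:Vahanconj} holds for them. I would place $e,f$ in the side containing them, apply the conjecture there, and on the other side apply it to control the three cut-edges at the apex vertex $v$; the crucial arithmetic is that in any FR-triple the three edges at a vertex receive the membership pattern of an FR-triple restricted to a vertex, i.e. the triple $(\nu(e_1),\nu(e_2),\nu(e_3))$ of counts at the apex is one of a small admissible list (each perfect matching covers exactly one of the three edges at $v$, and the FR condition forbids all three matchings from coinciding on any edge). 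The hard part is showing these admissible patterns on the two apices are \emph{simultaneously} realizable while also hitting the prescribed $(i,j)$ on $(e,f)$ --- that is, that the freedom granted by Conjecture~\ref{conj:Vahanconj} on both sides is enough to force a common cut-pattern along $e_1,e_2,e_3$.

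Finally I would assemble the pieces: argue that every admissible apex pattern on the $e,f$-free side can be matched by some choice on the $e,f$-containing side, possibly by permuting the three perfect matchings $F_1,F_2,F_3$ of one FR-triple (a relabeling that preserves the FR condition) to align the cut edges, and conclude that the glued list is a genuine FR-triple of $G$ with $\nu_C(e)=i$ and $\nu_C(f)=j$, contradicting that $G$ is a counter-example. This contradiction rules out cyclic cuts of sizes $2$ and $3$, leaving cyclic edge-connectivity at least $4$, which is the claim. The delicate bookkeeping --- checking that the prescribed-pair freedom survives every relabeling needed to reconcile the two apex patterns, across all the sub-cases of where $e,f$ sit relative to the cut --- is where I expect the real work to lie.
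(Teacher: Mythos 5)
Your proposal follows essentially the same route as the paper: suppress $2$-cuts to a single new edge on each side, contract across a non-trivial $3$-cut to an apex vertex on each side, apply Conjecture~\ref{conj:Vahanconj} to the strictly smaller pieces, and glue --- with the three pairwise-adjacent apex edges being precisely what makes the two-adjacent-edge prescription (rather than Fan--Raspaud alone) strong enough to close the induction. The ``hard part'' you defer resolves in the direction opposite to the one in your final paragraph: fix the FR-triple on the piece containing $e$ and $f$ first, read off the apex counts $k_1,k_2,k_3$ (which sum to $3$ with each $k_t\le 2$ by the FR-condition, so $1\le k_1+k_2\le 3$ and $k_1,k_2\le 2$), then invoke Conjecture~\ref{conj:Vahanconj} on the other piece for two of its adjacent apex edges with values $k_1,k_2$; the third count is forced, the count vectors agree, a permutation of the three matchings aligns the cut pattern for gluing, and the subcase where one of $e,f$ lies in the cut is handled the same way since a non-trivial $3$-cut of a $3$-edge-connected cubic graph is a matching, so that edge simply becomes an apex edge of the piece containing the other.
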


\begin{proof} Let $G$ be a smallest counter-example to Conjecture \ref{conj:Vahanconj}. Clearly, it is connected. Let us show that it has no $2$-edge-cuts. On the opposite assumption, consider a $2$-edge-cut $C$. Let us show that $C\cap \{e,f\}\neq \emptyset$. If none of $e$ and $f$ is in $C$, then consider the standard two smaller graphs $G_1$ and $G_2$. Assume that $G_1$ contains both of the edges. Then since $G_1$ is smaller, it has an FR-triple containing $e$ and $f$, $i$ and $j$ times, respectively. Let $k$ be the frequency of the new edge of $G_1$ arising from $C$. Now, we can take a similar FR-triple in $G_2$, where the new edge of $G_2$ is covered exactly $k$ times, and if we glue these two FR-triples, we will have an FR-triple of $G$ containing $e$ and $f$, $i$ and $j$ times, respectively. This is a contradiction.

Now, assume that $e\in C$. Clearly $f\notin C$. Again assume that $G_1$ contains the edge $f$. Since $G_1$ is smaller, we have that it admits an FR-triple containing $e$ and $f$, $i$ and $j$ times, respectively. Now, take an FR-triple containing the new edge of $G_2$ $i$ times. If we glue these FR-triples, we will have an FR-triple containing $e$ and $f$, $i$ and $j$ times, respectively. This is a contradiction.

Thus, we have that $G$ is $3$-edge-connected. Now, let us show that all $3$-edge-cuts of $G$ are trivial. Assume that $G$ contains a non-trivial $3$-edge-cut $C$. Let us show that $C\cap \{e,f\}\neq \emptyset$. If none of $e$ and $f$ is in $C$, then consider the standard two smaller graphs $G_1$ and $G_2$. Assume that $G_1$ contains both of the edges. Then since $G_1$ is smaller, it has an FR-triple containing $e$ and $f$, $i$ and $j$ times, respectively. Let the frequency of edges of $C$ in this FR-triple of $G_1$ be $k_1, k_2, k_3$. Clearly, $k_1+k_2+k_3=3$ and $k_1+k_2\geq 1$, $k_1, k_2\leq 2$. Now, since $G_2$ is not a counter-example, we can find an FR-triple in $G_2$ such that the same two edges of $C$ are covered $k_1$ and $k_2$ times. Observe that the third edge has to be covered exactly $k_3$ times. Now, if we glue back these two FR-triples, we will have an FR-triple of $G$ covering $e$ and $f$, $i$ and $j$ times, respectively. This is a contradiction.

Thus, we are left with the case when $e\in C$. Observe that since $G$ is $3$-edge-connected, $C$ is a matching, hence $f\notin C$. Then since $G_1$ is smaller, it has an FR-triple containing $e$ and $f$, $i$ and $j$ times, respectively. Let the frequency of edges of $C$ in this FR-triple of $G_1$ be $k_1, k_2, k_3$. Clearly, $k_1+k_2+k_3=3$ and $k_1+k_2\geq 1$, $k_1, k_2\leq 2$. Now, since $G_2$ is not a counter-example, we can find an FR-triple in $G_2$ such that the same two edges of $C$ are covered $k_1$ and $k_2$ times. Observe that the third edge has to be covered exactly $k_3$ times. Now, if we glue back these two FR-triples, we will have an FR-triple of $G$ covering $e$ and $f$, $i$ and $j$ times, respectively. This is a contradiction. 

Thus, all $3$-edge-cuts of $G$ have to be trivial, which means that $G$ is cyclically $4$-edge-connected. The proof is complete.
\end{proof}

\medskip

Let us note that it is unknown whether the smallest counter-example to Fan-Raspaud conjecture is cyclically $4$-edge-connected. On the other hand, it can be shown that the smallest counter-example to Conjecture \ref{conj:GagikFRtriple} is $3$-edge-connected. It would be interesting to show that Conjecture \ref{conj:Vahanconj} is equivalent to Conjecture \ref{conj:FR}.

\bibliographystyle{elsarticle-num}


\begin{thebibliography}{99}






\bibitem{FR1994} G. Fan, A. Raspaud, Fulkerson's Conjecture and circuit covers. J. Comb. Theory B, 61 (1994), 133--138.

\bibitem{Fulkerson} D.R. Fulkerson, Blocking and anti-blocking pairs of polyhedra, Math. Programming 1 (1971), 168--194.










\bibitem{Mazz2010} G. Mazzuoccolo, The equivalence of two conjectures of Berge and Fulkerson, J. Graph Theory 68(2), 2011, 125--128.






\bibitem{Zhang1997} C.-Q. Zhang, Integer flows and cycle covers of graphs, Marcel Dekker, Inc., New York Basel Hong Kong, 1997.


\end{thebibliography}

\end{document}